


\documentclass
[
3p,times,final,authoryear]{elsarticle}


\usepackage{amsthm}

\usepackage{natbib}

\usepackage{dsfont}
\usepackage{graphicx}
\usepackage{mathrsfs}
\usepackage{amssymb}
\usepackage{amsbsy}
\usepackage{amsmath}
\usepackage{amsfonts}
\usepackage{latexsym}
\usepackage[latin1]{inputenc}
\usepackage{txfonts}
\usepackage{color}

\usepackage{hyperref}

\newcommand{\abs}[1]{\left\vert#1\right\vert}

\newcommand{\pq}[1]{\left[#1\right]}
\newcommand{\pt}[1]{\left(#1\right)}


\newcommand{\di}[1]{\mathrm{d}#1}


\newcommand{\Data}{X^{(n)}} 
\newcommand{\DataXe}{X^{(n)}} 
\newcommand{\dataXe}{x^{(n)}} 


\newcommand{\pa}[1]{\lfloor#1\rfloor}

\def\1{\mathbf{1}}

\newtheorem{theorem}{\bf Theorem}
\newtheorem{lemma}{\bf Lemma}

\newtheorem{proposition}{\bf Proposition}
\newtheorem{definition}{\bf Definition}

\newtheorem{Rem}{\bf Remark}
\newtheorem{Ex}{\bf Example}





\journal{}

\makeatletter
\renewcommand\@biblabel[1]{}
\renewenvironment{thebibliography}[1]
     {\section*{\refname}%
      \list{}%
           {\leftmargin0pt
            \@openbib@code
            \usecounter{enumiv}
            }%
      \sloppy
      \clubpenalty4000
      \@clubpenalty \clubpenalty
      \widowpenalty4000%
      \sfcode`\.\@m}
     {\def\@noitemerr
       {\@latex@warning{Empty `thebibliography' environment}}%
      \endlist}
\makeatother

\begin{document}

\begin{frontmatter}



\title{Sharp sup-norm Bayesian curve estimation
}
\author{Catia Scricciolo\corref{cor1}}


\cortext[cor1]{Corresponding author.
} \ead{catia.scricciolo@univr.it}




\address{Department of Economics, University of Verona, Via Cantarane 24, 37129 Verona, Italy}

\begin{abstract}

Sup-norm curve estimation is a fundamental statistical problem and, in principle, a premise
for the construction of confidence bands for infinite-dimensional
parameters. In a Bayesian framework, the issue of whether the
sup-norm-concentration-of-posterior-measure approach proposed by \citet{GN11}, which involves
solving a testing problem exploiting concentration properties of kernel and projection-type density estimators around their expectations,
can yield minimax-optimal rates is herein settled in the affirmative beyond conjugate-prior settings
obtaining sharp rates for common prior-model pairs like random histograms, Dirichlet Gaussian or Laplace mixtures,
which can be employed for density, regression or quantile estimation.

\end{abstract}

\begin{keyword}
McDiarmind's inequality \sep Nonparametric hypothesis testing \sep
Posterior distributions \sep Sup-norm rates





\end{keyword}

\end{frontmatter}



\section{Introduction}
The study of the frequentist asymptotic behaviour of Bayesian nonparametric (BNP) procedures has initially focused on
the Hellinger or $L^1$-distance loss, see \citet{SW2001} and \citet{GGvdvV2000}, but an extension and generalization of the results to
$L^r$-distance losses, $1\leq r\leq \infty$, has been the object of two recent contributions by \citet{GN11} and \citet{CAS2014}.
Sup-norm estimation has particularly attracted attention as it may constitute
the premise for the construction of confidence bands whose geometric structure can be easily visualized
and interpreted. Furthermore, as shown in the example of Section \ref{ss:quantile}, the study of sup-norm posterior contraction
rates for density estimation can be motivated as being an intermediate step for the final assessment of convergence rates for
quantile estimation.

While the contribution of \cite{CAS2014} has a more prior-model specific flavour,
the article by \citet{GN11} aims at a unified understanding of the drivers of the
asymptotic behaviour of BNP procedures by
developing a new approach to the involved testing problem
constructing nonparametric tests that have good exponential bounds
on the type-one and type-two error probabilities that rely on
concentration properties of kernel and projection-type density estimators around their expectations.

Even if \citet{GN11}'s approach can only be useful if a fine control of
the approximation properties of the prior support is possible,
it has the merit of replacing the entropy condition for sieve sets with approximating
conditions. However, the result, as presented in their Theorem 2 (Theorem 3),
can only retrieve minimax-optimal rates for $L^r$-losses when
$1\leq r\leq 2$, while rates deteriorate by a genuine power of $n$, in fact $n^{1/2}$, for $r>2$.
Thus, the open question remains whether
their approach can give the right rates for $2< r \leq \infty$ for non-conjugate priors and
sub-optimal rates are possibly only an artifact of the proof.
We herein settle this issue in the affirmative by refining their result and proof and
showing in concrete examples that this approach retrieves the right rates.

The paper is organized as follows. In Section \ref{s:mainresult}, we state the main result whose proof is postponed to
\ref{App0}. Examples concerning different statistical settings like
density and quantile estimation are presented in Section \ref{s:examples}.

\section{Main result}\label{s:mainresult}
In this section, we describe the set-up and present the main contribution of this note.
Let $\bigl((\mathcal X,\, \mathcal A,\,P),\, P\in \mathcal P\bigr)$ be a collection of probability measures
on a measurable space $(\mathcal X,\,\mathcal A)$ that
possess densities with respect to some $\sigma$-finite dominating measure $\mu$.
Let $\Pi_n$ be a sequence of priors on $(\mathcal P,\,\mathcal B)$, where $\mathcal B$
is a $\sigma$-field on $\mathcal P$ for which the maps $x\mapsto p(x)$ are jointly measurable relative to
$\mathcal A \otimes \mathcal B$. Let $X_1,\,\ldots,\,X_n$ be i.i.d. (independent, identically distributed)
observations from a common law $P_0\in \mathcal P$ with density $p_0$ on $\mathcal X$ with respect to $\mu$, $p_0=\di P_0/\di \mu$.
For a probability measure $P$ on $(\mathcal{X},\,\mathcal{A})$ and
an $\mathcal{A}$-measurable function $f:\,\mathcal{X}\rightarrow \mathbb{R}^k$, $k\geq1$, let
$P f$ denote the integral $\int f\,\di P$, where, unless otherwise specified,
the set of integration is understood to be the whole domain.
When this notation is applied to the empirical measure $\mathbb{P}_n$ associated with
a sample $X^{(n)}:=(X_1,\,\ldots,\,X_n)$, namely the discrete uniform measure on the sample values,
this yields $\mathbb{P}_n f=n^{-1}\sum_{i=1}^nf(X_i)$. For each $n\in\mathbb{N}$, let
$\hat p_n(j)(\cdot)={n}^{-1}\sum_{i=1}^nK_j(\cdot,\,X_i)$ be a kernel or projection-type density estimator based
on $X_1,\,\ldots,\,X_n$ at resolution level $j$, with $K_j$ as in
Definition \eqref{def:approxseq} below. Its expectation is then equal to
$P_0^n\hat p_n(j)(\cdot)=P_0K_j(\cdot,\,X_1)=K_j(p_0)(\cdot)$,
where we have used the notation $K_j(p_0)(\cdot)=\int K_j(\cdot,\,y)p_0(y)\,\di y$.
In order to refine \citet{GN11}'s result, we use
concentration properties of $\|\hat p_n(j)-K_j(p_0)\|_1$ around its expectation by applying
McDiarmind's inequality for bounded differences functions.

The following definition, which corresponds to Condition 5.1.1 in \citet{GN15}, is essential for the main result.
\begin{definition}\label{def:approxseq}
\emph{Let $\mathcal X=\mathbb{R}$, $\mathcal X=[0,\,1]$ or $\mathcal X=(0,\,1]$. The sequence of operators
\[K_j(x,\,y):=2^j K(2^jx,\, 2^jy), \quad x,\,y\in\mathcal X, \quad j\geq0,\]
is called \emph{an admissible approximating sequence} if it satisfies one of the
following conditions:
\begin{itemize}
\item[a)] convolution kernel case, $\mathcal X = \mathbb{R}$: $K(x,\,y)=K(x-y)$,
where $K\in L^1(\mathbb{R})\cap L^\infty(\mathbb{R})$,
integrates to $1$ and is of bounded $p$-variation for some finite $p\geq1$ and right (left)-continuous;\\[-0.6cm]
\item[b)] multi-resolution projection case, $\mathcal X = \mathbb{R}$: $K(x,\,y)=\sum_{k\in\mathbb{Z}}\phi(x-k)\phi(y-k)$,
with $K_j$ as above or $K_j(x,\,y)=K(x,\,y)+\sum_{\ell=0}^{j-1}\sum_k\psi_{lk}(x)\psi_{lk}(y)$, where
$\phi,\,\psi\in L^1(\mathbb{R})\cap L^\infty(\mathbb{R})$ define an $S$-regular wavelet basis, have bounded $p$-variation for some
$p\geq1$ and are uniformly continuous, or define the Haar basis, see Chapter 4, \emph{ibidem};\\[-0.6cm]
\item[c)] multi-resolution case, $\mathcal X = [0,\,1]$: $K_{j,bc}(x,\,y)$ is the projection kernel at resolution $j$
of a Cohen-Daubechies-Vial (CDV) wavelet basis, see Chapter 4, \emph{ibidem};\\[-0.6cm]
\item[d)] multi-resolution case, $\mathcal X = (0,\,1]$: $K_{j,per}(x,\,y)$ is the projection kernel at resolution $j$
of the periodization of a scaling function satisfying b), see (4.126) and (4.127), \emph{ibidem}.
\end{itemize}}
\end{definition}
\begin{Rem}
\emph{A useful property of $S$-regular wavelet bases is the following: there exists a non-negative measurable
function $\Phi\in L^1(\mathbb{R})\cap L^\infty(\mathbb{R})$ such that $|K(x,\,y)|\leq \Phi(|x-y|)$
for all $x,\,y\in\mathbb{R}$, that is, $K$ is dominated by a bounded and integrable convolution kernel
$\Phi$.}
\end{Rem}
In order to state the main result, we recall that a sequence of positive real numbers $L_n$
is \emph{slowly varying} at $\infty$ if, for each $\lambda >0$, it holds that
$\lim_{n\rightarrow\infty}(L_{[\lambda n]}/L_n)=1$. Also, for $s\geq0$, let $L^1(\mu_s)$ be the space of
$\mu_s$-integrable functions, $\di \mu_s(x):=(1+|x|)^s\di x$,
equipped with the norm $\|f\|_{L^1(\mu_s)}:=\int |f(x)|(1+|x|)^s\,\di x$.
\begin{theorem}\label{th:l_r-norms}
Let $\epsilon_n$ and $J_n$ be sequences of positive real numbers such that
$\epsilon_n\rightarrow0$, $n\epsilon_n^2\rightarrow\infty$ and $2^{J_n}=O(n\epsilon_n^2)$.
For each $r\in\{1,\,\infty\}$ and a slowly varying sequence
$L_{n,r}\rightarrow\infty$, let $\epsilon_{n,r}:=L_{n,r}\epsilon_n$. Suppose that, for $K$ as in Definition \eqref{def:approxseq},
with $K^2$, $\Phi^2$ and  $p_0$ that integrate $(1+|x|)^s$ for some $s>1$ in cases a) and b),
\begin{equation}\label{eq:approx}
\|K_{{J_n}}(p_0)-p_0\|_r=O(\epsilon_{n,r})
\end{equation}
and, for a constant $C>0$, sets
$\mathcal P_n\subseteq \{P\in\mathcal P:\, \|K_{{J_n}}(p)-p\|_r\leq C_{K}\epsilon_{n,r}\}$,
where $C_{K}>0$ only depends on $K$, we have
\begin{itemize}
\item[(i)] $\Pi_n(\mathcal P\setminus \mathcal P_n)\leq \exp{\bigl(-(C+4)n\epsilon_n^2\bigr)}$,
\item[(ii)] $\Pi_n\bigl(P\in\mathcal P:\,-P_0\log (p/p_0)\leq\epsilon_n^2,\,\,\, P_0\log^2 (p/p_0)\leq\epsilon_n^2\bigr)\geq
\exp{(-Cn\epsilon_n^2)}$.
\end{itemize}
Then, for sufficiently large $M_r>0$,
\begin{equation}\label{eq:postconc}
P_0^n\Pi_n\bigl(P\in\mathcal P:\,\|p-p_0\|_r\geq M_r\epsilon_{n,r}\mid \Data\bigr)\rightarrow0.
\end{equation}
If the convergence in \eqref{eq:postconc} holds for $r\in\{1,\,\infty\}$, then, for each $1< s < \infty$.
$P_0^n\Pi_n\bigl(P\in\mathcal P:\,\|p-p_0\|_s\geq M_s\bar\epsilon_n\mid \Data\bigr)\rightarrow0$,
where $\bar\epsilon_n:=(L_{n,1}\vee L_{n,\infty})\epsilon_n$.
\end{theorem}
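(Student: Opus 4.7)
The plan is to follow the standard testing-plus-KL-plus-sieve decomposition of posterior mass à la \citet{GGvdvV2000}, with the novelty that the tests are built directly from the kernel/projection density estimator $\hat p_n(J_n)$ as in \citet{GN11}, and the concentration step is sharpened via McDiarmid's bounded-differences inequality so as to recover the correct rate in the $L^1$-case (and, via a companion Talagrand-type inequality, in the $L^\infty$-case).

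Fix $r \in \{1,\infty\}$ and denote the alternative by $H_1 := \{P \in \mathcal P_n:\, \|p-p_0\|_r \geq M_r\epsilon_{n,r}\}$. I use the test statistic $T_n := \|\hat p_n(J_n) - K_{J_n}(p_0)\|_r$ with threshold test $\varphi_n := \1\{T_n > \kappa M_r \epsilon_{n,r}\}$ for some $\kappa \in (0, 1/2)$. By the triangle inequality combined with \eqref{eq:approx} and the defining property of $\mathcal P_n$, every $P \in H_1$ satisfies
\[\|K_{J_n}(p) - K_{J_n}(p_0)\|_r \geq \|p-p_0\|_r - \|K_{J_n}(p)-p\|_r - \|K_{J_n}(p_0)-p_0\|_r \geq (M_r - 2C_K)\epsilon_{n,r},\]
which exceeds $\kappa M_r \epsilon_{n,r}$ once $M_r$ is large. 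Hence the type-I and type-II errors of $\varphi_n$ reduce to deviation estimates for $\|\hat p_n(J_n) - K_{J_n}(p)\|_r$ from zero at scale $\epsilon_{n,r}$, under $P_0$ and under a generic $P \in H_1$ respectively.

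The central step is to control these deviations uniformly in $P$. For $r=1$, McDiarmid's inequality applied to $(x_1,\ldots,x_n) \mapsto \|n^{-1}\sum_{i=1}^n K_{J_n}(\cdot,x_i) - K_{J_n}(p)(\cdot)\|_1$ has per-coordinate bounded-differences constant $2n^{-1}\sup_x\|K_{J_n}(\cdot,x)\|_1$, which is $O(n^{-1})$ uniformly in $x$ and $J_n$ by the envelope of the Remark above in cases a), b) (and analogously in the wavelet cases c), d)). A standard bias-variance computation, exploiting $2^{J_n}=O(n\epsilon_n^2)$ together with the moment-integrability hypotheses on $K^2$, $\Phi^2$ and $p_0$, yields $P^n\|\hat p_n(J_n) - K_{J_n}(p)\|_1 = O(\epsilon_n)$ uniformly in $P \in \mathcal P_n$, so that McDiarmid delivers sub-Gaussian concentration at scale $\epsilon_n = o(\epsilon_{n,1})$. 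For $r = \infty$, McDiarmid is inadequate since $\|K_{J_n}(\cdot,x)\|_\infty$ grows like $2^{J_n}$; I apply instead Talagrand's inequality to the empirical process indexed by $\{K_{J_n}(x,\cdot):\,x\in\mathcal X\}$, combined with the moment bound $P^n\|\hat p_n(J_n) - K_{J_n}(p)\|_\infty = O(\sqrt{2^{J_n}(\log n)/n}) = O(\epsilon_n\sqrt{\log n})$, absorbing the logarithmic factor into the slowly varying sequence $L_{n,\infty}$. Together these bounds give $P_0^n \varphi_n + \sup_{P \in H_1} P^n(1-\varphi_n) \leq \exp(-(C+4)n\epsilon_n^2)$ for $M_r$ sufficiently large.

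The standard Ghosal-Ghosh-van der Vaart decomposition now bounds the posterior on $H_1$ by
\[\varphi_n + \frac{\Pi_n(\mathcal P\setminus\mathcal P_n)}{D_n} + \frac{\int(1-\varphi_n)\prod_{i=1}^n(p/p_0)(X_i)\,\di \Pi_n(P)}{D_n},\]
where $D_n := \int\prod_{i=1}^n(p/p_0)(X_i)\,\di \Pi_n(P)$ is at least $\exp(-(C+2)n\epsilon_n^2)$ on an event of $P_0^n$-probability tending to one, by condition (ii) and the usual KL-ball evidence lemma. Condition (i) controls the sieve-complement term, the tests control the main term, and \eqref{eq:postconc} follows. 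The hardest step is the uniform-in-$P$ expectation bound $P^n\|\hat p_n(J_n)-K_{J_n}(p)\|_r = O(\epsilon_{n,r})$ over $P \in \mathcal P_n$: the moment-integrability hypotheses and the $L^r$-approximation constraint defining $\mathcal P_n$ together transfer variance and bias control from $p_0$ to arbitrary $p \in \mathcal P_n$. Finally, the interpolation corollary for $1 < s < \infty$ is immediate from the log-convexity $\|f\|_s \leq \|f\|_1^{1/s}\|f\|_\infty^{1-1/s}$: the event $\{\|p-p_0\|_s \geq M_s\bar\epsilon_n\}$ is contained in $\{\|p-p_0\|_1 \geq c_s\bar\epsilon_n\} \cup \{\|p-p_0\|_\infty \geq c_s\bar\epsilon_n\}$ for a suitable $c_s > 0$, and both events have vanishing posterior probability in $P_0^n$-probability by \eqref{eq:postconc} at $r = 1$ and $r = \infty$.
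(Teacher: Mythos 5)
Your architecture (Ghosal--Ghosh--van der Vaart decomposition, plug-in test built from $\hat p_n(J_n)$, McDiarmid for the $L^1$ case, interpolation for $1<s<\infty$) coincides with the paper's for $r=1$, but the $r=\infty$ step has a genuine gap. The decomposition you invoke forces the type-II error to be at most $\exp(-(C+4)n\epsilon_n^2)$, i.e.\ you must prove $P^n\bigl(\|\hat p_n-K_{J_n}(p)\|_\infty\gtrsim M_\infty\epsilon_{n,\infty}\bigr)\leq\exp(-(C+4)n\epsilon_n^2)$ uniformly over the alternative. Talagrand's inequality for the class $\{K_{J_n}(x,\cdot):x\in\mathcal X\}$ cannot deliver this: the variance proxy is $\sup_x PK_{J_n}^2(x,\cdot)\asymp 2^{J_n}\asymp n\epsilon_n^2$ (with envelope $\asymp 2^{J_n}$), so at deviation scale $t\asymp L_{n,\infty}\epsilon_n$ the exponent it produces is of order $nt^2/2^{J_n}\asymp L_{n,\infty}^2$, which is only slowly varying and therefore far too weak compared with the required $n\epsilon_n^2$. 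To force the exponent up to $n\epsilon_n^2$ along this route you must inflate the deviation scale to $t\asymp\sqrt{2^{J_n}}\,\epsilon_n\asymp\sqrt n\,\epsilon_n^2$, which is exactly the $n^{1/2}$-deterioration in \citet{GN11} that the theorem is meant to remove; your argument reproduces the obstacle rather than overcoming it. A secondary point: your sup-norm expectation bound $O(\epsilon_n\sqrt{\log n})$ would additionally require $L_{n,\infty}\gg\sqrt{\log n}$ for the type-I error, whereas the statement allows an arbitrary slowly varying $L_{n,\infty}\rightarrow\infty$.

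The paper's proof avoids sup-norm concentration of $\hat p_n$ altogether. For both $r=1$ and $r=\infty$ the test is $\Psi_{n,r}=1\{\|\hat p_n-p_0\|_r>M_{0,r}\epsilon_{n,r}\}$, and the pointwise deviation $|\hat p_n(x)-p_0(x)|$ (respectively $|p(x)-p_0(x)|$ for the type-II error) is split into the deterministic bias terms $\|K_{J_n}(p_0)-p_0\|_r$, $\|K_{J_n}(p)-p\|_r$ plus the single scalar statistic $h(X^{(n)}):=\|\hat p_n-P\hat p_n\|_1$, so that the only stochastic quantity ever controlled---under $P_0$ for the type-I error and under $P\in\mathcal P_n$ for the type-II error---is this $L^1$ functional. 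McDiarmid applies to $h$ with coordinate-wise oscillation $2\|\Phi\|_1/n$, free of any factor $2^{J_n}$, and combined with $P^nh=O(\sqrt{2^{J_n}/n})=O(\epsilon_n)$ (Theorem 5.1.5 of \citet{GN15}) it yields an exponent proportional to $n\epsilon_{n,r}^2$, which exceeds $(C+4)n\epsilon_n^2$ for large $n$ since $L_{n,r}\rightarrow\infty$, with no logarithmic loss. Replacing your Talagrand step by this reduction of the sup-norm deviation to the $L^1$ statistic is what makes the $r=\infty$ case go through.
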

The assertion, whose proof is reported in \ref{App0},
is an in-probability statement that the posterior mass outside a sup-norm ball
of radius a large multiple $M$ of $\epsilon_n$ is negligible.
The theorem provides the same sufficient conditions for deriving sup-norm posterior
contraction rates that are minimax-optimal, up to logarithmic factors, as in
\citet{GN11}. Condition (ii), which is mutuated from \citet{GGvdvV2000},
is the essential one: the prior concentration rate is
the only determinant of the posterior contraction rate at densities $p_0$
having sup-norm approximation error of the same order against a kernel-type
approximant, provided the prior support is almost the set of densities with the same approximation
property.

\section{Examples}\label{s:examples}
In this section, we apply Theorem \ref{th:l_r-norms}
to some prior-model pairs used for (conditional) density or regression estimation,
including random histograms, Dirichlet Gaussian or Laplace
mixtures, that have been selected in an attempt to reflect cases for which the issue of obtaining sup-norm posterior rates
was still open. We do not consider Gaussian priors or wavelets series
because these examples have been successfully worked out in
\citet{CAS2014} taking a different approach. We furthermore exhibit an example
with the aim of illustrating that obtaining sup-norm posterior contraction rates for density estimation can be motivated as
being an intermediate step for the final assessment of convergence rates for estimating single quantiles.

\subsection{Density estimation}
\begin{Ex}[Random dyadic histograms] \emph{For $J_n\in\mathbb{N}$,
consider a partition of $[0,\,1]$ into $2^{J_n}$ intervals (\emph{bins}) of equal length
$A_{1,2^{J_n}}=[0,\,2^{-J_n}]$ and $A_{j,2^{J_n}}=((j-1)2^{-J_n},\,j2^{-J_n}]$,
$j=2,\,\ldots,\,2^{J_n}$. Let $\mathrm{Dir}_{2^{J_n}}$ denote the Dirichlet distribution
on the $(2^{J_n}-1)$-dimensional unit simplex with all parameters equal to $1$.
Consider the random histogram 
\[\sum_{j=1}^{2^{J_n}}w_{j,2^{J_n}}2^{J_n} 1_{A_{j,2^{J_n}}}(\cdot),\quad (w_{1,2^{J_n}},\,\ldots,\,w_{2^{J_n},2^{J_n}})\sim \mathrm{Dir}_{2^{J_n}}.\]
Denote by $\Pi_{2^{J_n}}$ the induced law on the space of probability measures with Lebesgue density on $[0,\,1]$.
Let $X_1,\,\ldots,\,X_n$ be i.i.d. observations from a density $p_0$ on $[0,\,1]$. Then, the Bayes' density estimator,
that is the posterior expected histogram, has expression
\[\hat p_n(x)=\sum_{j=1}^{2^{J_n}}\frac{1+N_{l(x)}}{2^{J_n}+n}2^{J_n} 1_{A_{j,2^{J_n}}}(x),\quad x\in [0,\,1],\]
where $l(x)$ identifies the bin containing $x$, i.e.,
$A_{l(x),2^{J_n}}\ni x$, and $N_{l(x)}$ stands for the number of observations falling into
$A_{l(x),2^{J_n}}$. Let $\mathcal C^\alpha([0,\,1])$ denote the class of Hölder continuous functions
on $[0,\,1]$ with exponent $\alpha>0$. Let
$\epsilon_{n,\alpha}:=\bigl(n/\log n\bigr)^{-\alpha/(2\alpha+1)}$ be the minimax
rate of convergence over $(\mathcal C^\alpha([0,\,1]),\,\|\cdot\|_\infty)$.}
\begin{proposition}
Let $X_1,\,\ldots,\,X_n$ be i.i.d. observations from a density $p_0\in\mathcal C^\alpha([0,\,1])$, with
$\alpha\in(0,\,1]$, satisfying $p_0>0$ on $[0,\,1]$. Let $J_n$ be such that
$2^{J_n}\sim \epsilon_{n,\alpha}^{1/\alpha}$. Then, for sufficiently large $M>0$,
$P_0^n\Pi_{2^{J_n}}(P:\,\|p-p_0\|_\infty\geq M \epsilon_{n,\alpha}\mid \DataXe)\rightarrow0$.
Consequently, $P_0^n\|\hat p_n-p_0\|_\infty\asymp \epsilon_{n,\alpha}$.
\end{proposition}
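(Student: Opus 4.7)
The plan is to apply Theorem~\ref{th:l_r-norms} with $r=\infty$, choosing as admissible approximating sequence the Haar basis from case (b) of Definition~\ref{def:approxseq}. This is natural because the Haar projection at resolution $J_n$ is precisely bin-averaging on $\{A_{j,2^{J_n}}\}_j$, so that every histogram $p$ in the support of $\Pi_{2^{J_n}}$ equals $K_{J_n}(p)$. Taking $\mathcal P_n$ to be the full support of $\Pi_{2^{J_n}}$, the sieve condition (i) is then trivially satisfied, $\Pi_{2^{J_n}}(\mathcal P\setminus\mathcal P_n)=0$. The approximation condition \eqref{eq:approx} follows from H\"older regularity: on each bin of length $2^{-J_n}$ the oscillation of $p_0$ is at most $L\,2^{-J_n\alpha}$, hence $\|K_{J_n}(p_0)-p_0\|_\infty\lesssim 2^{-J_n\alpha}\asymp\epsilon_{n,\alpha}$ by the choice $2^{J_n}\sim\epsilon_{n,\alpha}^{-1/\alpha}$. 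Setting the base rate $\epsilon_n^2\asymp\epsilon_{n,\alpha}^2\asymp 2^{J_n}\log(n)/n$ ensures $2^{J_n}=O(n\epsilon_n^2)$.

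The main work is the Kullback--Leibler prior-mass condition (ii). Write $w_j^0:=\int_{A_{j,2^{J_n}}}p_0$; by positivity and continuity of $p_0$ on $[0,1]$, $w_j^0\asymp 2^{-J_n}$ uniformly in $j$. For $p=\sum_j w_j 2^{J_n}1_{A_{j,2^{J_n}}}$, decomposing $p_0/p=(p_0/p^*)(p^*/p)$ with $p^*:=\sum_j w_j^0 2^{J_n}1_{A_{j,2^{J_n}}}$ and Taylor-expanding the logarithm in $(w_j-w_j^0)/w_j^0$ while using that $\sum_j(w_j-w_j^0)=0$ kills the first-order term, one obtains $-P_0\log(p/p_0)\lesssim \sum_j w_j^0(w_j/w_j^0-1)^2+\|p^*-p_0\|_2^2/\inf p_0$ and the analogous bound for $P_0\log^2(p/p_0)$ via $\|\log(p_0/p^*)\|_\infty\lesssim\|p^*-p_0\|_\infty$. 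The bin-by-bin H\"older estimate gives $\|p^*-p_0\|_\infty^2\vee\|p^*-p_0\|_2^2\lesssim 2^{-2J_n\alpha}\asymp\epsilon_n^2$, so on the Dirichlet event $E_n:=\{\max_j|w_j/w_j^0-1|\leq c\epsilon_n\}$ both KL quantities are $O(\epsilon_n^2)$. It remains to show $\Pi_{2^{J_n}}(E_n)\geq\exp(-Cn\epsilon_n^2)$: this is the standard small-ball estimate for the uniform Dirichlet on the $2^{J_n}$-simplex (of the type used by \citet{GGvdvV2000} for histogram priors), giving a lower bound of order $\exp(-c\,2^{J_n}\log(2^{J_n}/\epsilon_n))\geq\exp(-Cn\epsilon_n^2)$ because $2^{J_n}\log(2^{J_n}/\epsilon_n)\asymp 2^{J_n}\log n\asymp n\epsilon_n^2$. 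The matching of the three rates $\epsilon_n^2$, $2^{J_n}/n$ and $2^{-2J_n\alpha}$ with compatible constants is the delicate bookkeeping, and is the step I expect to require the most care.

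With (i) and (ii) verified, Theorem~\ref{th:l_r-norms} yields $P_0^n\Pi_{2^{J_n}}(\|p-p_0\|_\infty\geq M\epsilon_{n,\alpha}\mid\DataXe)\to 0$. For the Bayes estimator, I would use Jensen's inequality to write $\|\hat p_n-p_0\|_\infty\leq\int\|p-p_0\|_\infty\,\di\Pi_{2^{J_n}}(p\mid\DataXe)$ and split the integral at the sup-norm ball of radius $M\epsilon_{n,\alpha}$: inside the ball the integrand is bounded by $M\epsilon_{n,\alpha}$, while outside one uses the trivial deterministic bound $\|p\|_\infty\leq 2^{J_n}$ on the histogram support together with the exponential decay in $P_0^n$-expectation of the posterior mass of the complement, showing that the outside contribution is of smaller order than $\epsilon_{n,\alpha}$. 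The matching lower bound $P_0^n\|\hat p_n-p_0\|_\infty\gtrsim\epsilon_{n,\alpha}$ is the classical sup-norm minimax rate over $(\mathcal C^\alpha([0,1]),\|\cdot\|_\infty)$, so $P_0^n\|\hat p_n-p_0\|_\infty\asymp\epsilon_{n,\alpha}$ as claimed.
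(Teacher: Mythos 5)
Your proposal is correct and takes essentially the same route as the paper: the paper likewise applies Theorem \ref{th:l_r-norms} with the Haar projection kernel, takes the sieve to be the full histogram support, and delegates the bias bound and the Dirichlet small-ball (Kullback--Leibler) verification to the proof of Proposition 3 of \citet{GN11}, which you simply carry out explicitly. The treatment of the Bayes estimator is also the same: Jensen's inequality, the uniform bound $2^{J_n}$ on densities in the prior support, the exponential decay of the expected posterior mass outside the $M\epsilon_{n,\alpha}$-ball, and the sup-norm minimax lower bound to upgrade $O(\epsilon_{n,\alpha})$ to $\asymp\epsilon_{n,\alpha}$.
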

\emph{The first part of the assertion, which concerns posterior contraction rates,
immediately follows from Theorem \eqref{th:l_r-norms} combined with the
proof of Proposition 3 of \citet{GN11}, whose result, together with that of Theorem 3
in \citet{CAS2014}, is herein improved to the
minimax-optimal rate $\bigl(n/\log n\bigr)^{-\alpha/(2\alpha+1)}$ for every $0<\alpha\leq 1$.
The second part of the assertion, which concerns convergence rates for the histogram density estimator, is a
consequence of Jensen's inequality and convexity of $p\mapsto \|p-p_0\|_\infty$,
combined with the fact that the prior $\Pi_{2^{J_n}}$ is supported on densities uniformly bounded above by $2^{J_n}$ and
that the proof of Theorem \ref{th:l_r-norms} yields the exponential
order $\exp{(-Bn\epsilon_{n,\alpha}^2)}$ for the convergence of the posterior probability of the complement
of an $(M\epsilon_{n,\alpha})$-ball around $p_0$, in symbols,
$P_0^n\|\hat p_n-p_0\|_\infty< M\epsilon_{n,\alpha} + 2^{J_n} P_0^n\Pi_{2^{J_n}}(P:\,\|p-p_0\|_\infty
\geq M \epsilon_{n,\alpha}\mid \DataXe)\leq M \epsilon_{n,\alpha} + 2^{J_n} \exp{(- Bn\epsilon_{n,\alpha}^2)},$
whence $P_0^n\|\hat p_n-p_0\|_\infty=O(\epsilon_{n,\alpha})$.}
\end{Ex}

\smallskip

\begin{Ex}[Dirichlet-Laplace mixtures] \emph{Consider, as in \citet{S2011}, \citet{Gao2015}, a Laplace mixture prior $\Pi$
thus defined. For $\varphi(x):=\frac{1}{2}\exp{(-|x|)}$, $x\in\mathbb{R}$,
the density of a Laplace$\,(0,\,1)$ distribution, let
\begin{itemize}
\item $p_G(\cdot):=\int \varphi(\cdot-\theta)\,\di G(\theta)$ denote a mixture of Laplace densities with mixing distribution $G$,\\[-0.6cm]
\item $G\sim \mathrm{D}_\alpha$, the Dirichlet process with base measure $\alpha:=\alpha_{\mathbb{R}}\bar\alpha$,
for $0<\alpha_{\mathbb{R}}<\infty$ and $\bar \alpha$ a probability measure on $\mathbb{R}$.
\end{itemize}
\begin{proposition}\label{prop2}
Let $X_1,\,\ldots,\,X_n$ be i.i.d. observations from a density $p_{G_0}$, with $G_0$
supported on a compact interval $[-a,\,a]$. If $\alpha$ has support on $[-a,\,a]$
with continuous Lebesgue density bounded below away from $0$ and above from $\infty$,
then, for sufficiently large $M>0$, $P_0^n\Pi(P:\,\|p-p_0\|_\infty\geq M (n/\log n)^{-3/8}\mid \DataXe)\rightarrow0$.
Consequently, for the Bayes' estimator $\hat p_n(\cdot)=\int p_G(\cdot)\Pi(\di G\mid \Data)$
we have $P_0^n\|\hat p_n-p_0\|_\infty\asymp (n/\log n)^{-3/8}$.
\end{proposition}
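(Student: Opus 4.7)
The plan is to invoke Theorem \ref{th:l_r-norms} with $r=\infty$. I would set $\epsilon_n:=n^{-3/8}$ and choose a slowly varying $L_{n,\infty}\asymp(\log n)^{3/8}$ so that $\epsilon_{n,\infty}:=L_{n,\infty}\epsilon_n\asymp(n/\log n)^{-3/8}$; pick $J_n$ with $2^{J_n}\asymp n^{1/4}$, so that $2^{J_n}=O(n\epsilon_n^2)$; and take $K$ to be a multi-resolution projection kernel from item (b) of Definition \ref{def:approxseq}, built on a wavelet basis of sufficient regularity.

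For condition (i), the hypothesis on $\alpha$ forces $G\sim\mathrm{D}_\alpha$ to be a.s.\ supported in $[-a,a]$, so I would set $\mathcal P_n:=\{p_G:G\in\mathcal G_n\}$ with $\mathcal G_n$ the collection of discrete mixing measures on $[-a,a]$ having at most $N_n\asymp n\epsilon_n^2/\log n$ atoms; the stick-breaking representation together with the Dirichlet-process tail estimates of \citet{S2011} and \citet{Gao2015} then give $\Pi(\mathcal P\setminus\mathcal P_n)\leq\exp(-(C+4)n\epsilon_n^2)$. For condition (ii), I would discretize $G_0$ by some $G^*$ on $N\asymp\log n$ quantile points of $[-a,a]$ and use a matched partition; the continuity and boundedness (above and below) of the density of $\alpha$ on $[-a,a]$ make every cell's Dirichlet parameter of order unity, and the standard Dirichlet small-ball estimate yields $\Pi(\|G-G^*\|_{\mathrm{TV}}\leq\epsilon_n)\gtrsim\exp(-cn\epsilon_n^2)$. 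Coupled with the Lipschitz-in-$G$ bounds $-P_0\log(p_G/p_{G_0})\lesssim\|p_G-p_{G_0}\|_\infty^2$ and $P_0\log^2(p_G/p_{G_0})\lesssim\|p_G-p_{G_0}\|_\infty^2$, valid because $p_{G_0}$ and $p_G$ share the same exponential tail outside $[-a,a]$ and remain uniformly positive on any compact, this delivers the KL prior concentration at rate $\epsilon_n$.

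The hard part will be the approximation $\|K_{J_n}(p_{G_0})-p_{G_0}\|_\infty=O(\epsilon_{n,\infty})$: because the Laplace density $\varphi$ has only global Lipschitz regularity (it lies in $B^{1}_{\infty,\infty}$), a direct wavelet-bias estimate yields merely $O(2^{-J_n})=O(n^{-1/4})$, coarser than the target $(n/\log n)^{-3/8}$. The sharper $O(2^{-3J_n/2})$ rate should be reachable by exploiting the Fourier identity $\hat p_{G_0}(\xi)=\hat G_0(\xi)/(1+\xi^2)$ together with $|\hat G_0(\xi)|\leq 1$ (from compactness of $\mathrm{supp}(G_0)$), which places $p_{G_0}$ in $H^{3/2-\eta}(\mathbb R)$ for every $\eta>0$, and then converting this $L^2$-Sobolev control into $L^\infty$ through a scale-by-scale Bernstein inequality applied to the wavelet coefficients, at the price of a polylogarithmic factor absorbed into $L_{n,\infty}$, consistently with the Hellinger contraction rate established in \citet{Gao2015}. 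Once the three conditions hold, the statement on the Bayes estimator $\hat p_n$ follows from Jensen's inequality and the convexity of $p\mapsto\|p-p_0\|_\infty$, using $\|p_G\|_\infty\leq 1/2$ for every Laplace mixture and the exponential order $e^{-Bn\epsilon_n^2}$ of the posterior complement probability yielded by the proof of Theorem \ref{th:l_r-norms}, exactly as in the random-histogram example.
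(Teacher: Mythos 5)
Your high-level plan (invoke Theorem \ref{th:l_r-norms} with $r=\infty$, then get the Bayes-estimator claim from Jensen's inequality, $p_G\leq 1/2$, and the exponential order of the posterior complement probability) is the same as the paper's, and for condition (ii) the clean route is exactly what the paper does, namely quoting Proposition 4 of \citet{Gao2015} for $\epsilon_n=(n/\log n)^{-3/8}$ (your stated shortcut $-P_0\log(p_G/p_{G_0})\lesssim\|p_G-p_{G_0}\|_\infty^2$ is not valid as written, since the weight $1/p_{G_0}$ is not integrable). However, two of your steps have genuine gaps. First, the sieve: under $\mathrm{D}_\alpha$ the mixing measure has countably infinitely many atoms almost surely, so your class $\mathcal G_n$ of mixing measures with at most $N_n$ atoms has prior mass zero and $\Pi(\mathcal P\setminus\mathcal P_n)=1$; condition (i) cannot be verified this way. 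No sieve is needed: the paper takes $\mathcal P_n$ equal to the whole support of $\Pi$, which is admissible precisely because its bias bound holds uniformly in $G$, so (i) is trivial.

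Second, and more seriously, your bias argument does not reach the rate. Converting the $H^{3/2-\eta}$ control into a sup-norm bound is not a polylogarithmic loss: scale by scale, $\|\sum_k\beta_{\ell k}\psi_{\ell k}\|_\infty\lesssim 2^{\ell/2}\bigl(\sum_k\beta_{\ell k}^2\bigr)^{1/2}$, so $\|\beta_{\ell\cdot}\|_{\ell^2}\lesssim 2^{-3\ell/2}$ only yields $\|K_{J_n}(p_G)-p_G\|_\infty\lesssim 2^{-J_n}$, i.e.\ you pay the polynomial factor $2^{J_n/2}$. This cannot be improved in general: $p_G$ has a kink at every atom of $G$ (it is only Lipschitz in the sup-norm scale, $B^1_{\infty,\infty}$), and for a symmetric kernel the pointwise error at a kink has a non-cancelling first-order term proportional to $2^{-J_n}\int|u|K(u)\,\di u$. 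With your choice $2^{J_n}\asymp n\epsilon_n^2\asymp n^{1/4}$ the sup-norm bias is therefore of order $n^{-1/4}\gg(n/\log n)^{-3/8}$, so neither \eqref{eq:approx} nor the approximation requirement defining $\mathcal P_n$ is met. The paper's mechanism is different: it works in the Fourier domain, using the domination $|\tilde p_G(t)|=|\tilde G(t)|\,|\tilde\varphi(t)|\leq(1+t^2)^{-1}$ which holds uniformly over all mixing distributions $G$, applies Lemma \ref{lemma1} (Watson--Leadbetter, with $\beta=2$) to the bias of a symmetric convolution kernel with finite second moment (case a) of Definition \ref{def:approxseq}, not a wavelet kernel), and takes the much smaller resolution $2^{J_n}\sim\epsilon_n^{-1/3}=(n/\log n)^{1/8}$. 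That uniform characteristic-function bound is the ingredient your argument is missing; it is simultaneously what delivers the $3/8$ exponent and what makes the trivial choice $\mathcal P_n=\mathrm{supp}(\Pi)$ possible.
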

\begin{proof}
It is known from Proposition 4 in \citet{Gao2015} that the small-ball probability estimate
in condition (ii) of Theorem \ref{th:l_r-norms}
is satisfied for $\epsilon_n=(n/\log n)^{-3/8}$. For the bias condition, we take
$\mathcal P_n$ to be the support of $\Pi$ and show that, for $2^{J_n}\sim \epsilon_n^{-1/3}=(n/\log n)^{1/8}$ and any symmetric density
$K$ with finite second moment, we have
$\|K_{{J_n}}(p_G)-p_G\|_\infty=O(\epsilon_n)$
uniformly over the support of $\Pi$. Indeed, by applying Lemma \ref{lemma1} with $\beta=2$,
for each $x\in\mathbb{R}$ it results $
|K_{{J_n}}(p_G)(x)-p_G(x)|^2  \leq
\|K_{{J_n}}(p_G)-p_G\|_2^2
\leq\int |\tilde \varphi(t)|^2|\tilde K(2^{-J_n}t)-1|^2\,\di t \sim
(2\pi)^{-1}(B_\varphi^2\times I_2[\tilde K])(2^{2J_n})^{-3}$,
which implies that both conditions \eqref{eq:approx} and (i) are satisfied.
The assertion on the Bayes' estimator follows from the same arguments laid out for random histograms
together with the fact that $p_G\leq 1/2$ uniformly in $G$.
\end{proof}
}
\end{Ex}

\begin{Ex}[Dirichlet-Gaussian mixtures] \emph{Consider, as in \citet{GvdV01, GvdV072b},
\citet{STG2013}, \citet{S2014}, a Gaussian mixture prior $\Pi\times G$
thus defined. For $\phi$ the standard normal density, let
\begin{itemize}
\item $p_{F,\sigma}(\cdot):=\int \phi_\sigma(\cdot-\theta)\,\di F(\theta)$ denote a mixture of Gaussian densities with mixing distribution $F$,\\[-0.6cm]
\item $F\sim \mathrm{D}_\alpha$, the Dirichlet process with base measure $\alpha:=\alpha_{\mathbb{R}}\bar\alpha$,
for $0<\alpha_{\mathbb{R}}<\infty$ and $\bar \alpha$ a probability measure on $\mathbb{R}$, which has
continuous and positive density $\alpha'(\theta)\propto e^{-b|\theta|^\delta}$ as $|\theta|\rightarrow\infty$,
for some constants $0<b<\infty$ and $0<\delta\leq 2$,\\[-0.6cm]
\item $\sigma\sim G$ which has continuous and positive
density $g$ on $(0,\,\infty)$ such that, for constants
$0<C_1,\,C_2,\,D_1,\,D_2<\infty$, $0\leq s,\,t<\infty$,
\[C_1\sigma^{-s}\exp{(-D_1\sigma^{-1}\log^t(1/\sigma)}\leq g(\sigma)\leq C_2\sigma^{-s}
\exp{(-D_2\sigma^{-1}\log^t(1/\sigma))}\]
for all $\sigma$ in a neighborhood of $0$.
\end{itemize}
Let $\mathcal C^\beta(\mathbb{R})$ denote the class of Hölder continuous functions
on $\mathbb{R}$ with exponent $\beta>0$.
Let
$\epsilon_{n,\beta}:=\bigl(n/\log n\bigr)^{-\beta/(2\beta+1)}$ be the minimax
rate of convergence over $(\mathcal C^\beta(\mathbb{R}),\,\|\cdot\|_\infty)$.
For any real $\beta>0$, let $\pa{\beta}$ stand for the largest integer strictly smaller than $\beta$.
\begin{proposition}\label{prop2}
Let $X_1,\,\ldots,\,X_n$ be i.i.d. observations from a density $p_0\in L^\infty(\mathbb{R})\cap C^\beta(\mathbb{R})$ such that condition $(ii)$
is satisfied for $\epsilon_{n,\beta}$. 
Then, for sufficiently large $M>0$, $P_0^n(\Pi\times G)((F,\,\sigma):\,\|p_{F,\sigma}-p_0\|_\infty\geq M \epsilon_{n,\beta}\mid \DataXe)\rightarrow0$.
\end{proposition}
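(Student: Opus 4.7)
The plan is to invoke Theorem \ref{th:l_r-norms} with $r=\infty$ at rate $\epsilon_n=\epsilon_{n,\beta}$, choosing the resolution level $J_n$ with $2^{J_n}\sim\epsilon_{n,\beta}^{-1/\beta}=(n/\log n)^{1/(2\beta+1)}$ and a convolution kernel $K$ as in Definition \ref{def:approxseq}(a) of order $\pa{\beta}$ whose Fourier transform $\tilde K$ is compactly supported and equal to one on a neighborhood of the origin. Condition (ii) is granted by hypothesis, and the bias condition \eqref{eq:approx} at $p_0$ follows from $p_0\in\mathcal{C}^\beta(\R)$ and the vanishing moments of $K$, which yield $\|K_{J_n}(p_0)-p_0\|_\infty=O(2^{-J_n\beta})=O(\epsilon_{n,\beta})$. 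What remains is the construction of a sieve $\mathcal P_n$ on which the same kernel-bias bound holds uniformly and whose complement has prior mass at most $\exp(-(C+4)n\epsilon_n^2)$.

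Following \citet{STG2013} and \citet{S2014}, I would take
\[\mathcal P_n:=\bigl\{p_{F,\sigma}:\,F=\sum_{j=1}^{N_n}w_j\delta_{\theta_j},\,\theta_j\in[-a_n,a_n],\,\sigma\in[\underline{\sigma}_n,\overline{\sigma}_n]\bigr\},\]
with $a_n$ a slow power of $\log n$, $\overline{\sigma}_n$ a slow power of $n$, $N_n\asymp n\epsilon_n^2/\log n$ atoms, and $\underline{\sigma}_n$ a lower bandwidth cutoff to be chosen. The mass bound $(\Pi\times G)(\mathcal P\setminus\mathcal P_n)\leq e^{-(C+4)n\epsilon_n^2}$ decomposes into (1) a Dirichlet-process tail estimate $\Pi(F([-a_n,a_n]^c)>\delta_n)\leq e^{-cn\epsilon_n^2}$, valid because of the hypothesis $\alpha'(\theta)\propto e^{-b|\theta|^\delta}$, (2) a stick-breaking truncation $\Pi(\sum_{j>N_n}V_j>\delta_n)\leq e^{-cN_n\log N_n}$, and (3) a bandwidth bound $G([0,\underline{\sigma}_n)\cup(\overline{\sigma}_n,\infty))\leq e^{-cn\epsilon_n^2}$ supplied by the assumed envelopes on $g$.

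The core of the argument, and the main obstacle, is the uniform sup-norm kernel-bias bound on $\mathcal P_n$. I would handle it in the Fourier domain: since $\tilde K(2^{-J_n}t)-1$ vanishes for $|t|\leq c_0\,2^{J_n}$ and $|\widehat{p_{F,\sigma}}(t)|\leq e^{-\sigma^2t^2/2}$, Fourier inversion gives
\[\|K_{J_n}(p_{F,\sigma})-p_{F,\sigma}\|_\infty\leq\frac{\|\tilde K-1\|_\infty}{2\pi}\int_{|t|\geq c_0\,2^{J_n}}e^{-\sigma^2 t^2/2}\,\di t=O\bigl((\sigma\,2^{J_n})^{-1}e^{-c_0^2\sigma^2 2^{2J_n}/2}\bigr),\]
uniformly in $F$ and in $\sigma\in[\underline{\sigma}_n,\overline{\sigma}_n]$. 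This bound is $\leq C_K\epsilon_{n,\beta}$ as soon as $\underline{\sigma}_n\,2^{J_n}\gtrsim\sqrt{\log n}$, whereas the prior-mass estimate on $\{\sigma<\underline{\sigma}_n\}$ requires $\underline{\sigma}_n^{-1}\log^t(1/\underline{\sigma}_n)\gtrsim n\epsilon_n^2$; the tension between these two requirements is the delicate calibration of the proof, and the exponent $t$ in the assumed envelope on $g$ near zero is precisely what regulates its feasibility, possibly up to an extra polylogarithmic factor in $\underline{\sigma}_n$ tuned to $t$. Once $\underline{\sigma}_n$ is fixed consistently with both, all the hypotheses of Theorem \ref{th:l_r-norms} are verified and the posterior contracts at rate $\epsilon_{n,\beta}$ in sup-norm, as claimed.
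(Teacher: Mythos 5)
Your core argument is the paper's: invoke Theorem \ref{th:l_r-norms} with $r=\infty$, a kernel with $\pa{\beta}$ vanishing moments and band-limited Fourier transform for the bias at $p_0$, and a Fourier-inversion bound $\|K_{J_n}(p_{F,\sigma})-p_{F,\sigma}\|_\infty\lesssim\int_{|t|>2^{J_n}}|\tilde\phi_\sigma(t)|\,\di t$, which is uniform in $F$ and controlled by a lower truncation $\underline{\sigma}_n$ of the bandwidth, with the remaining mass handled by the assumed envelope of $g$ near $0$. Where you diverge is the sieve, and this is where there are problems. First, the Shen--Tokdar--Ghosal-type truncation (finitely many atoms $N_n$, locations in $[-a_n,a_n]$, upper cutoff $\overline{\sigma}_n$) is not only unnecessary here --- your own Fourier bound holds uniformly over \emph{all} mixing distributions $F$ and all $\sigma\geq\underline{\sigma}_n$, since $|\widetilde{p_{F,\sigma}}(t)|\leq|\tilde\phi_\sigma(t)|$ regardless of $F$, and it improves as $\sigma$ grows, so no upper cutoff is needed --- it is also problematic as literally written: a Dirichlet process draw is almost surely \emph{not} finitely supported, so $\mathcal P_n=\{F=\sum_{j=1}^{N_n}w_j\delta_{\theta_j},\ldots\}$ has $(\Pi\times G)$-mass zero and condition (i) fails outright. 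Those truncations exist in the entropy-based literature to control covering numbers, which Theorem \ref{th:l_r-norms} never asks for; the paper's sieve is simply $\{(F,\sigma):\sigma\geq\underline{\sigma}_n\}$, and the only mass computation is $G(\sigma<\underline{\sigma}_n)$.

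Second, the calibration you flag as ``the main obstacle'' and leave open actually closes cleanly, and closing it is part of the proof. Take $\underline{\sigma}_n:=E(n\epsilon_{n,\beta}^2)^{-1}(\log n)^{\psi}$ with $1/2<\psi<t$. Then $G(\sigma<\underline{\sigma}_n)\lesssim\underline{\sigma}_n^{-s}\exp(-D_2\underline{\sigma}_n^{-1}\log^t(1/\underline{\sigma}_n))\lesssim\exp(-(C+4)n\epsilon_{n,\beta}^2)$ because $\underline{\sigma}_n^{-1}\log^t(1/\underline{\sigma}_n)\asymp n\epsilon_{n,\beta}^2(\log n)^{t-\psi}$ with $t>\psi$; and since Theorem \ref{th:l_r-norms} only requires $2^{J_n}=O(n\epsilon_n^2)$, you are free to take $2^{J_n}$ as large as $\asymp n\epsilon_{n,\beta}^2$ (the bias at $p_0$, being $O(2^{-J_n\beta})$, only gets smaller), in which case $\underline{\sigma}_n2^{J_n}\asymp(\log n)^{\psi}$ and $\psi>1/2$ already gives $\underline{\sigma}_n^{-1}\exp(-(\rho\underline{\sigma}_n2^{J_n})^2)\lesssim n^{-1}<\epsilon_{n,\beta}$. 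So there is no tension to resolve by extra polylogarithmic tuning: the exponent $t$ enters only through the requirement $\psi<t$, exactly as in the paper's proof. With the sieve simplified and this choice of $(\underline{\sigma}_n, J_n,\psi)$ made explicit, your argument becomes the paper's argument.
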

\begin{proof}
Let $K\in L^1(\mathbb{R})$ be a convolution kernel such that
\begin{description}
\item[$\bullet$] $\int x^k K(x)\,\di x=\1_{\{0\}}(k)$, $k=0,\,\ldots,\, \pa{\beta}$, and $\int|x|^{\beta}|K(x)|\,\di x<\infty$,
\item[$\bullet$] the Fourier transform $\tilde K$ has $\textrm{supp}(\tilde K) \subseteq [-1,\,1]$.
\end{description}
Let $2^{J_n}\sim \epsilon_{n,\beta}^{1/\beta}$. For every $x\in \mathbb{R}$,
$|K_{{J_n}}(p_0)(x)-p_0(x)|\leq C_1(2^{- J_n})^\beta\lesssim \epsilon_{n,\beta}$, where the
constant $C_1\propto (1/\pa{\beta}!)\int|x|^{\beta}|K(x)|\,\di x$ does not depend on $x$.
Thus, $\|K_{{J_n}}(p_0)-p_0\|_\infty=O(\epsilon_{n,\beta})$.
For the bias condition, let $\underline{\sigma}_n:=E(n\epsilon_{n,\beta}^{2})^{-1}(\log n)^{\psi}$,
with $1/2<\psi<t$ and a suitable constant $0<E<\infty$.
For every $\sigma\geq\underline{\sigma}_n$ and uniformly in $F$,
\[\begin{split}
\|K_{J_n}(p_{F,\sigma})-p_{F,\sigma}\|_\infty
& =\sup_{x\in \mathbb{R}}\abs{\int \int K_{{J_n}}(u)[\phi_\sigma(x-v-u)-\phi_\sigma(x-v)]\,\di u\,\di F(v)}\\
& \leq\frac{1}{2\pi}\sup_{x\in \mathbb{R}}\int
\int |e^{-it(x-v)}||\tilde\phi_\sigma(t)||\tilde K_{}(2^{-J_n}t)-1|\,\di t\,\di F(v)\\
& \leq \frac{1}{\pi}
\int_{|t|>2^{J_n}} |\tilde\phi_\sigma(t)|\,\di t
\lesssim \underline{\sigma}_n^{-1}\exp{(-(\rho\underline{\sigma}_n2^{J_n})^2)}\lesssim n^{-1}<\varepsilon_{n,\beta}
\end{split}\]
because $(\underline{\sigma}_n2^{J_n})^2\propto(\log n)^{2\psi}\gtrsim(\log n)$ as $\psi>1/2$.  Now,
$G(\sigma<\underline{\sigma}_n)\lesssim \underline{\sigma}_n^{-s}\exp{(-[D_2\underline{\sigma}_n^{-1}\log^t(1/\underline{\sigma}_n)])}\lesssim \exp{(-(C+4)n\epsilon_n^2)}$ because $\psi<t$, which implies that the remaining mass condition $(ii)$ is satisfied.
\end{proof}
\begin{Rem}
\emph{Conditions on the density $p_0$ under which assumption (ii) of Theorem \ref{th:l_r-norms} is satisfied
can be found, for instance, in \citet{STG2013} and \citet{S2014}.}
\end{Rem}
}
\end{Ex}

\subsection{Quantile estimation}\label{ss:quantile}
For $\tau\in(0,\,1)$, consider the problem of estimating the $\tau$-quantile $q_{0}^\tau$
of the population distribution function $F_0$ from observations $X_1,\,\ldots,\,X_n$.
For any (possibly unbounded) interval $I\subseteq \mathbb{R}$ and function $g$ on $I$, define the Hölder norm as
\[
\|g\|_{\mathcal C^\alpha(I)}:=\sum_{k=0}^{\pa{\alpha}}\|g^{(k)}\|_{L^\infty(I)}+\sup_{x,\,y\in I:\,x\neq y}\frac{|g^{\pa{\alpha}}(x)-g^{\pa{\alpha}}(y)|}{|x-y|^{\alpha-\pa{\alpha}}}.
\]
Let $\mathcal C^0(I)$ denote the space of continuous and bounded functions on $I$ and
$\mathcal C^\alpha(I,\,R):=\{g\in \mathcal C^0(I):\,\|g\|_{\mathcal C^\alpha(I)}\leq R\}$, $R>0.$
\begin{proposition}\label{prop:inversion3}
Suppose that, given $\tau\in(0,\,1)$, there are constants $r,\,\zeta>0$ so that
$p_{0}(\cdot+q^\tau_{0})\in\mathcal C^\alpha([-\zeta,\,\zeta],\,R)$ and
\begin{equation}\label{eq:positivity}
\inf_{[q^\tau_{0}-\zeta,\,q^\tau_{0}+\zeta]} p_{0}(x)\geq r.
\end{equation}
Consider a prior $\Pi$ concentrated on probability measures having densities
$p(\cdot+q^\tau_{0})\in \mathcal C^\alpha([-\zeta,\,\zeta],\,R)$. If, for sufficiently large $M$,
the posterior probability
$
P_0^n\Pi(P:\,\|p-p_{0}\|_\infty\geq M \epsilon_{n,\alpha}\mid \Data)\rightarrow0
$, then, there exists $M'>0$ so that $P_0^n\Pi(|q^\tau- q^\tau_{0}|\geq M' \epsilon_{n,\alpha}^{1+1/\alpha}\mid \Data)\rightarrow0$.
\end{proposition}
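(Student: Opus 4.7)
The plan is to transfer the sup-norm posterior contraction on $p$ into a contraction of the induced quantile $q^\tau = F^{-1}(\tau)$ by a pathwise inversion argument. Since by hypothesis $P_0^n \Pi(\mathcal E_n^c \mid \Data) \to 0$ for the event $\mathcal E_n := \{p\in\mathcal P : \|p - p_0\|_\infty \leq M\epsilon_{n,\alpha}\}$, it suffices to produce a deterministic bound $|q^\tau - q_0^\tau| \leq M'\epsilon_{n,\alpha}^{1+1/\alpha}$ valid on $\mathcal E_n$ for all $n$ sufficiently large; the conclusion then follows by complementation.

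First I would verify that on $\mathcal E_n$ the induced quantile $q^\tau$ eventually lies inside the H\"older window $[q_0^\tau - \zeta, q_0^\tau + \zeta]$. Integrating $|p - p_0| \leq M\epsilon_{n,\alpha}$ over this compact window gives $|F(q_0^\tau \pm \zeta) - F_0(q_0^\tau \pm \zeta)| \leq 2\zeta M\epsilon_{n,\alpha}$, while \eqref{eq:positivity} forces $|F_0(q_0^\tau \pm \zeta) - \tau| \geq r\zeta$. For $n$ large the values of $F$ at $q_0^\tau \pm \zeta$ therefore land on the appropriate sides of $\tau$, and monotonicity of $F$ localizes $q^\tau$ inside the window.

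Next, using $F(q^\tau) = \tau = F_0(q_0^\tau)$ and applying the mean value theorem to $F_0$ together with \eqref{eq:positivity}, one obtains
\[
r\,|q^\tau - q_0^\tau| \leq |F_0(q^\tau) - F(q^\tau)|.
\]
Decomposing $F_0(q^\tau) - F(q^\tau) = [F_0(q_0^\tau) - F(q_0^\tau)] + \int_{q_0^\tau}^{q^\tau}(p_0 - p)\,\mathrm{d}t$, the integral on the right is at most $M\epsilon_{n,\alpha}|q^\tau - q_0^\tau|$ by the sup-norm control and can be absorbed into the left-hand side for $\epsilon_{n,\alpha}$ small enough. The problem therefore reduces to showing $|F(q_0^\tau) - F_0(q_0^\tau)| \leq C\epsilon_{n,\alpha}^{1+1/\alpha}$.

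The main obstacle is this pointwise CDF estimate, since the sup-norm bound alone provides no useful control on $\int_{-\infty}^{q_0^\tau}(p - p_0)$. The additional $\epsilon_{n,\alpha}^{1/\alpha}$ factor must come from the joint H\"older regularity on the neighborhood, where $p - p_0$ has sup-norm $\leq M\epsilon_{n,\alpha}$ and H\"older-$\alpha$ seminorm $\leq 2R$. These two controls balance at the bandwidth scale $h\sim \epsilon_{n,\alpha}^{1/\alpha}$ --- the same scale at which $2^{-J_n}$ is chosen in the preceding examples --- and I would exploit this interpolation by windowing at scale $h$ around $q_0^\tau$: near $q_0^\tau$ the centered difference $(p-p_0)(t) - (p-p_0)(q_0^\tau)$ is controlled by the H\"older error $2R|t-q_0^\tau|^\alpha \lesssim \epsilon_{n,\alpha}$ on a window of length $h$, so cancellation over such a window against the corresponding increment of $F_0$ contributes $\epsilon_{n,\alpha}\cdot h = \epsilon_{n,\alpha}^{1+1/\alpha}$, while the complementary contribution outside the window is absorbed using monotonicity and the positivity condition. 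Closing this interpolation cleanly, and verifying that the constants only depend on $r$, $\zeta$ and $R$, is the delicate point of the argument; the rest of the proof is routine.
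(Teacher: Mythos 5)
Your overall reduction is the same as the paper's: restrict to the event $\{\|p-p_0\|_\infty\le M\epsilon_{n,\alpha}\}$, invert through the mean value theorem and the positivity condition \eqref{eq:positivity}, and reduce everything to the pointwise CDF estimate $|F(q^\tau_0)-F_0(q^\tau_0)|\lesssim\epsilon_{n,\alpha}^{1+1/\alpha}$ (this is exactly the paper's identity \eqref{eq:translation}). But your proposal has two genuine gaps. First, the localization step is not justified: $F(q^\tau_0\pm\zeta)-F_0(q^\tau_0\pm\zeta)=\int_{-\infty}^{q^\tau_0\pm\zeta}(p-p_0)\,\di x$ is an integral over a half-line, so it is not bounded by integrating $|p-p_0|\le M\epsilon_{n,\alpha}$ over the window $[q^\tau_0-\zeta,\,q^\tau_0+\zeta]$ alone; the sup-norm bound says nothing about the contribution of the far tail. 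Second, and more seriously, the key interpolation that you explicitly leave open cannot be closed along the lines you sketch. Windowing at scale $h\sim\epsilon_{n,\alpha}^{1/\alpha}$ only controls the oscillation of $p-p_0$ \emph{inside} the window; the obstruction is the bulk term $\int_{-\infty}^{q^\tau_0-h}(p-p_0)\,\di x$, which neither monotonicity nor \eqref{eq:positivity} (a condition on $p_0$ on the window only) touches. Concretely, take $p=p_0+\epsilon\eta$ with $\eta$ smooth, bounded, supported outside $[q^\tau_0-\zeta,\,q^\tau_0+\zeta]$ on a region where $p_0$ is bounded below, $\int\eta\,\di x=0$ and $\int_{-\infty}^{q^\tau_0}\eta\,\di x=c>0$: for small $\epsilon$ this $p$ is a density satisfying every regularity requirement placed on the prior support (it coincides with $p_0$ on the window), $\|p-p_0\|_\infty\asymp\epsilon$, yet $F(q^\tau_0)-\tau=c\epsilon$, so by \eqref{eq:translation} $|q^\tau-q^\tau_0|\asymp\epsilon\gg\epsilon^{1+1/\alpha}$. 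Hence no pathwise argument from the sup-norm event plus local H\"{o}lder regularity alone can deliver the rate $\epsilon_{n,\alpha}^{1+1/\alpha}$; some genuinely additional control of $F$ at $q^\tau_0$ (for instance posterior contraction in the Kolmogorov distance) would have to be imported.

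For comparison, the paper attacks the CDF step by smoothing: it writes $F(q^\tau_0)-F_0(q^\tau_0)=T_1+T_2+T_3$, bounds the two bias terms $T_1,T_3$ by $Db^{\alpha+1}$ via Lemma 5.2 of \citet{DRT2013}, and treats $T_2=[K_b\ast(F-F_0)](q^\tau_0)$ by adding and subtracting $(F-F_0)(q^\tau_0)$, which after a mean-value step produces the factor $2$ in $2[F(q^\tau_0)-F_0(q^\tau_0)]=T_1+T_3+O(b\|p-p_0\|_\infty)$, and then $b\sim\epsilon_{n,\alpha}^{1/\alpha}$ balances the terms. You should be aware, however, that this identity hinges on the sign kept in the substitution $u=q^\tau_0-bz$ (the orientation reversal of the limits should cancel the minus sign the paper retains); with the correct sign the term $(F-F_0)(q^\tau_0)$ appears with coefficient one on both sides and cancels, so the decomposition gives no bound on $|F(q^\tau_0)-F_0(q^\tau_0)|$ --- consistent with the counterexample above, and with the observation that the claimed quantile rate $(n/\log n)^{-(\alpha+1)/(2\alpha+1)}$ is faster than the parametric rate $n^{-1/2}$ attainable for estimating a quantile of a density bounded away from zero. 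So your instinct that this is ``the delicate point'' is correct, but the gap is real: the step you could not close is precisely the step that, under the stated assumptions, cannot be closed.
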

\begin{proof}
We preliminarily make the following remark.
Let $F(x):=\int_{-\infty}^x p(y)\,\di y$, $x\in \mathbb{R}$. For $\tau\in(0,\,1)$,
let $q^\tau$ be the $\tau$-quantile of $F$. By Lagrange's theorem, there exists a 
point $q^\tau_*$ between $q^\tau$
and $q^\tau_{0}$ so that $F(q^\tau)-F(q^\tau_{0})=p(q^\tau_*)(q^\tau- q^\tau_{0})$. Consequently,
\[
0=\tau-\tau =\int_{-\infty}^{q^\tau}p(x)\,\di x-\int_{-\infty}^{q^\tau_{0}}p_{0}(x)\,\di x=\int_{q^\tau_{0}}^{q^\tau}p(x)\,\di x+
\int_{-\infty}^{q^\tau_{0}}[p(x)-p_{0}(x)]\,\di x=p(q^\tau_*)(q^\tau- q^\tau_{0})+[F(q^\tau_{0})-F_{0}(q^\tau_{0})].
\]
If $p(q^\tau_*)>0$, 
then
\begin{equation}\label{eq:translation}
q^\tau- q^\tau_{0}=-\frac{[F(q^\tau_{0})-F_{0}(q^\tau_{0})]}{p(q^\tau_*)}=
-\frac{[F(q^\tau_{0})-\tau]}{p(q^\tau_*)}.
\end{equation}
In order to upper bound $|q^\tau- q^\tau_{0}|$, by appealing to relationship \eqref{eq:translation},
we can separately control $|F(q_{0}^\tau)-F_{0}(q_{0}^\tau)|$ and
$p(q^\tau_\ast)$.
Let 
the kernel function $K\in L^1(\mathbb{R})$ 
be such that 
\begin{description}
\item[$\bullet$] $\int x^k K(x)\di x=1_{\{0\}}(k)$, $k=0,\,\ldots,\, \lfloor\alpha\rfloor+1$, and
$\int|x|^{\alpha+1}|K(x)|\,\di x<\infty$,\\[-0.5cm]
\item[$\bullet$] its Fourier transform $\tilde K$ has $\textrm{supp}(\tilde K) \subseteq [-1,\,1]$.
\end{description}
By Lemma 5.2 in \citet{DRT2013},
\begin{equation}\label{eq:bias}
\sup_{p_{0}(\cdot+q_{0}^\tau)\in \mathcal C^{\alpha}([-\zeta,\,\zeta],\, R)}\abs{\int_{-\infty}^{q^\tau_{0}}[K_b\ast p_{0}-p_{0}](x)\,\di x}\leq Db^{\alpha+1},
\end{equation}
with $D:=[R/(\lfloor\alpha\rfloor+1)!+2\zeta^{-(\alpha+1)}]\int|x|^{\alpha+1}|K(x)|\,\di x$. Write
\[
F(q^\tau_{0})-F_{0}(q^\tau_{0})=
\int_{-\infty}^{q^\tau_{0}}[K_b\ast p_{0}-p_{0}](x)\,\di x +
\int_{-\infty}^{q^\tau_{0}}[K_b\ast(p-p_{0})](x)\,\di x +
\int_{-\infty}^{q^\tau_{0}}[p-K_b\ast p](x)\,\di x=:T_1+T_2+T_3.
\]
By inequality \eqref{eq:bias}, we have $|T_1|=O(b^{\alpha+1})$. By the same reasoning, $|T_3|=O(b^{\alpha+1})$.
We now consider $T_2$. Taking into account that $\int K(x)\,\di x=1$ and
\[
T_2:=[K_b\ast (F-F_{0})](q^\tau_{0})=\int\frac{1}{b}K\pt{\frac{q^\tau_{0}-u}{b}}(F-F_{0})(u)\,\di u
=-\int K(z)(F-F_{0})(q^\tau_{0}-bz)\,\di z
=\int K(z)(F_{0}-F)(q^\tau_{0}-bz)\,\di z,
\]
for some point $\xi$ between $q^\tau_{0}-bz$ and $q^\tau_{0}$ (clearly, $\xi$ depends on $q^\tau_{0},\,z,\,b$),
\[
\begin{split}
T_2=[K_b\ast (F-F_{0})](q^\tau_{0})\mp (F_{0}-F)(q^\tau_{0})
&=\int K(z)[(F_{0}-F)(q^\tau_{0}-bz)-(F_{0}-F)(q^\tau_{0})]\,\di z+ (F_{0}-F)(q^\tau_{0})\\
&=\int K(z)(-bz)[D^1(F_{0}-F)(\xi)]\,\di z + (F_{0}-F)(q^\tau_{0})\\
&=(-b)\int zK(z)[(p_{0}-p)(\xi)]\,\di z + (F_{0}-F)(q^\tau_{0}).
\end{split}
\]
Then, $
F(q^\tau_{0})-F_{0}(q^\tau_{0})=T_1+T_3+ (-b)\int zK(z)[(p_{0}-p)(\xi)]\,\di z - [(F-F_{0})(q^\tau_{0})]$,
which implies that $2[F(q^\tau_{0})-F_{0}(q^\tau_{0})]=T_1+T_3+ (-b)\int zK(z)[(p_{0}-p)(\xi)]\,\di z$.
It follows that
$2|F(q^\tau_{0})-F_{0}(q^\tau_{0})|\leq |T_1|+|T_3|+ b\|p_{0}-p\|_\infty\int |z||K(z)|\,\di z$.
Taking into account that $\int |z||K(z)|\,\di z<\infty$,
$|T_1|=O(b^{\alpha+1})$ and
 $|T_3|=O(b^{\alpha+1})$, choosing $b=O(\epsilon_{n,\alpha}^{1/\alpha})$, we have
$|F(q^\tau_{0})-F_{0}(q^\tau_{0})|\lesssim |T_1|+|T_3|+ b\|p_{0}-p\|_\infty
\lesssim b^{\alpha+1}+b \|p_{0}-p\|_\infty\lesssim \epsilon_{n,\alpha}^{1+1/\alpha}$.
If $\|p-p_{0}\|_\infty\lesssim\epsilon_{n,\alpha}$ then, under condition
\eqref{eq:positivity}, $p(q^\tau_*)>r-\eta>0$ for every $0<\eta<r$.
In fact, for any interval $I\supseteq [q^\tau_{0}-\zeta,\,q^\tau_{0}+\zeta]$
that includes the point $q^\tau$ so that it also includes the intermediate point
$q^\tau_*$ between $q^\tau$ and $q^\tau_{0}$, for any $\eta>0$ we have
$
\eta\gtrsim \|p-p_{0}\|_\infty\geq \sup_I |p(x)-p_{0}(x)|
\geq  |p(\tilde x)-p_{0}(\tilde x)|$ for every $\tilde x \in I$.
It follows that
$p(q^\tau_*)>p_{0}(q^\tau_*)-\eta\geq\inf_{x\in[q^\tau_{0}-\zeta,\,q^\tau_{0}+\zeta]}p_{0}(x) -\eta\geq r-\eta$.
Conclude the proof by noting that, in virtue of \eqref{eq:translation},
$P_0^n\Pi(P:\,\|p-p_{0}\|_\infty < M\epsilon_{n,\alpha}\mid \Data)\leq
P_0^n\Pi(|q^\tau- q^\tau_{0}| < M' \epsilon_{n,\alpha}^{1+1/\alpha}\mid \Data)$.
The assertion then follows.
\end{proof}
\begin{Rem}\label{cor:L_3}
\emph{Proposition \ref{prop:inversion3} considers local
Hölder regularity of $p_0$, which seems natural for estimating single quantiles.
Clearly, requirements on $p_0$ are automatically satisfied if $p_0$ is globally Hölder regular and,
in this case, the minimax-optimal sup-norm rate is $\epsilon_{n,\alpha}=(n/\log n)^{-\alpha/(2\alpha+1)}$ so that
the rate for estimating single quantiles is
$\epsilon_{n,\alpha}^{1+1/\alpha}=(n/\log n)^{-(\alpha+1)/(2\alpha+1)}$.
The conditions on the random density $p$ are automatically satisfied if the prior is concentrated on probability measures
possessing globally Hölder regular densities.}
\end{Rem}


\appendix

\section{Proof of Theorem \ref{th:l_r-norms}}\label{App0}
\begin{proof}
Using the remaining mass condition (i) and the small-ball probability estimate (ii),
by the proof of Theorem 2.1 in \cite{GGvdvV2000}, it is enough to construct, for each $r\in\{1,\,\infty\}$,
a test $\Psi_{n,r}$ for the hypothesis
$$H_0:\,P=P_0\quad\mbox{\emph{vs.}}\quad H_1:\,\{P\in\mathcal{P}_n:\,\|p-p_0\|_r\geq M_r\epsilon_{n,r}\},$$
with $M_r>0$ large enough, where $\Psi_{n,r}\equiv\Psi_{n,r}(\DataXe;\,P_0):\,\mathcal X^n\rightarrow\{0,\,1\}$
is the indicator function of the rejection region of $H_0$, such that
\[\begin{split}
 P_0^n\Psi_{n,r}\rightarrow0 \,\,\, \mbox{ as $n\rightarrow\infty$} \,\,\,\,
 \mbox{ and }\,\,\,\, \sup_{P\in\mathcal{P}_n:\,\|p-p_0\|_r\geq M_r\epsilon_{n,r}}
P^n(1-\Psi_{n,r})\leq \exp{\bigl(-K_rM_r^2n\epsilon_{n,r}^2\bigr)} \, \mbox{ for sufficiently large $n$},
\end{split}\]
where $K_rM_r^2\geq (C+4)$, the constant $C>0$ being that appearing in (i) and (ii).
By assumption \eqref{eq:approx}, there exists a constant $C_{0,r}>0$ such that
$\|P_0^n\hat p_n-p_0\|_r=\|K_{J_n}(p_0)-p_0\|_r\leq C_{0,r}\epsilon_{n,r}$.
Define $T_{n,r}:=\|\hat p_n-p_0\|_r$. For a constant $M_{0,r}>C_{0,r}$, define the event
$A_{n,r}:=(T_{n,r}>M_{0,r}\epsilon_{n,r})$ and the test $\Psi_{n,r}:=1_{A_{n,r}}$.
For
\begin{itemize}
\item $r=1$, the triangular inequality $T_{n,1}\leq \|\hat p_n-P_0^n\hat p_n\|_1+\|P_0^n\hat p_n-p_0\|_1$ implies that, when
$T_{n,1}>M_{0,1}\epsilon_{n,1}$, $\|\hat p_n-P_0^n\hat p_n\|_1\geq T_{n,1} - \|P_0^n\hat p_n-p_0\|_1> M_{0,1}\epsilon_{n,1}-
\|P_0^n\hat p_n-p_0\|_1 \geq (M_{0,1}-C_{0,1})\epsilon_{n,1}$;
\item $r=\infty$, we have $|\hat p_n(x)-p_0(x)|\leq |\hat p_n(x)-P_0^n\hat p_n(x)|+|P_0^n\hat p_n(x)-p_0(x)|
\leq \|\hat p_n-P_0^n\hat p_n\|_1 + \|P_0^n\hat p_n-p_0\|_\infty$ for every $x\in\mathbb{R}$.
It follows that $T_{n,\infty}\leq \|\hat p_n-P_0^n\hat p_n\|_1+\|P_0^n\hat p_n-p_0\|_\infty$,
which implies that, when $T_{n,\infty}>M_{0,\infty}\epsilon_{n,\infty}$,
$\|\hat p_n-P_0^n\hat p_n\|_1\geq T_{n,\infty} - \|P_0^n\hat p_n-p_0\|_\infty> M_{0,\infty}\epsilon_{n,\infty}-
\|P_0^n\hat p_n-p_0\|_\infty\geq (M_{0,\infty}-C_{0,\infty})\epsilon_{n,\infty}$.
\end{itemize}
Let $h:\,\mathcal X^n\rightarrow[0,\,2]$
be the function defined as $h(\DataXe):=\|\hat p_n-P_0^n\hat p_n\|_1$.
Thus, for each $r\in\{1,\,\infty\}$, when $T_{n,r}>M_{0,r}\epsilon_{n,r}$,
the inequality $h(\DataXe)>(M_{0,r}-C_{0,r})\epsilon_{n,r}$ holds.
Therefore, to control the type-one error probability,
it is enough to bound above the probability on the right-hand side of the following display
\begin{equation}\label{eq:I_type_error}
P_0^n\Psi_{n,r}\leq P_0^n\bigl(h(\DataXe)>(M_{0,r}-C_{0,r})\epsilon_{n,r}\bigr),
\end{equation}
which can be done using McDiarmind's inequality, \cite{McDiarmid1989}.
Given any $\dataXe:=(x_1,\,\ldots,\,x_n)\in \mathcal X^n$, for each $1\leq i\leq n$,
let $x_i$ be the $i$th component of $\dataXe$ and $x_i':=(x_i+\delta)$ a perturbation of the
$i$th variable with $\delta\in \mathbb{R}$ so that $x_i'\in\mathcal X$.
Letting $e_i$ be the canonical vector with all zeros except for a $1$ in the $i$th position,
the vector with the perturbed $i$th variable can be expressed as $\dataXe+\delta e_i$.
If
\begin{itemize}
\item[$(a)$] the function $h$ has \emph{bounded differences}:
for some non-negative constants $c_1,\,\ldots,\,c_n$,
$$\sup_{\substack{\dataXe,\,x_i'}}
|h(\dataXe)-h(\dataXe+\delta e_i)|\leq c_i,\quad 1\leq i\leq n,$$
\item[$(b)$] $P_0^n h(\DataXe)=O(\epsilon_n)$,
\end{itemize}
then, for $C:=\sum_{i=1}^n c_i^2$, by McDiarmind's bounded differences inequality,
\[\forall\,t>0,\quad P_0^n\bigl(|h(\DataXe)- P_0^nh(\DataXe)|\geq t\bigr)\leq 2 \exp{\bigl(-2t^2/C\bigr)}.\]
We show that $(a)$ and $(b)$ are verified.
\begin{itemize}
\item[$(a)$] Using the inequality $||a|-|b||\leq|a-b|$, setting $\Phi=K$
under condition $a$) of Definition \eqref{def:approxseq},
\[\begin{split}
\forall\,i\in\{1,\,\ldots,\,n\},\quad\sup_{\dataXe,\,x_i'}
|h(\dataXe)-h(\dataXe+\delta e_i)|
&=\sup_{\dataXe,\,x_i'}\int\Bigg[\Bigg|\frac{1}{n}\sum_{i=1}^nK_{J_n}(x,\,x_i)- K_{J_n}(p_0)(x) \Bigg|\\&
\hspace*{1.5cm}  - \Bigg|\frac{1}{n}\sum_{i\neq i'}^nK_{J_n}(x,\,x_i)
+ \frac{1}{n}K_{J_n}(x,\,x_i') - K_{J_n}(p_0)(x) \Bigg|\Bigg]\,\di x\Bigg|\\
& \leq \sup_{x_i,\,x_i'}\frac{1}{n}\|K_{J_n}(\cdot,\,x_i)-K_{J_n}(\cdot,\,x_i')\|_1
\leq \frac{2}{n}\|\Phi\|_1.
\end{split}\]
Hence, $h$ has bounded differences with $c_i=2\|\Phi\|_1/n$, $1\leq i\leq n$.
\item[$(b)$] By Theorem 5.1.5 in \citet{GN15},
$P_0^n h(\DataXe)\leq L\sqrt{2^{J_n}/n}=O(\epsilon_n)$, with the following upper bounds for the constant $L$:\\[-0.5cm]
\begin{description}
\item[$\bullet$] under conditions $a$) and $b$) of Definition \eqref{def:approxseq}, setting $\Phi=K$ in case $a$),
$L\leq \sqrt{2/(s-1)}\,\|\Phi^2\|_{L^1(\mu_s)}^{1/2}\|p_0\|_{L^1(\mu_s)}^{1/2}$;
\item[$\bullet$] under conditions $c$) and $d$), $L\leq C(\phi)(1\vee \|p_0\|_{1/2})^{1/2}$, where the constant $C(\phi)$ only depends on $\phi$.
\end{description}
\end{itemize}
For $\alpha\in(0,\,1)$, taking $t=\sqrt{2}\alpha(M_{0,r}-C_{0,r})\epsilon_{n,r}$,
$$P_0^n\bigl(|h(\DataXe)- P_0^nh(\DataXe)|\geq \sqrt{2}\alpha(M_{0,r}-C_{0,r})\epsilon_{n,r}\bigr)
\leq 2 \exp{\bigl(-\alpha^2(M_{0,r}-C_{0,r})^2n\epsilon_{n,r}^2/\|\Phi\|_1^2}\bigr).$$
By $(b)$, there exists a constant $L'\geq L$ so that
$P_0^n h(\DataXe)\leq L'\epsilon_n= (L'/L_{n,r}) \epsilon_{n,r}$. Hence,
$|h(\DataXe)- P_0^nh(\DataXe)| \geq h(\DataXe)-P_0^nh(\DataXe) \geq h(\DataXe)- (L'/L_{n,r}) \epsilon_{n,r}$.
Thus, for sufficiently large $L_{n,r}$ so that
$[(M_{0,r}-C_{0,r})-(L'/L_{n,r})]\geq \sqrt{2}\alpha(M_{0,r}-C_{0,r})$,
\[\begin{split}
P_0^n\bigl(\|\hat p_n-P_0^n\hat p_n\|_1\geq (M_{0,r}-C_{0,r})\epsilon_{n,r}\bigr)
&\leq P_0^n\bigl(|h(\DataXe)- P_0^nh(\DataXe)|\geq \sqrt{2}\alpha(M_{0,r}-C_{0,r})\epsilon_{n,r}\bigr)\\
&\leq 2 \exp{\bigl(-\alpha^2(M_{0,r}-C_{0,r})^2n\epsilon_{n,r}^2/\|\Phi\|_1^2}\bigr).
\end{split}\]

We now provide an exponential upper bound on the type-two error probability.
For $r\in\{1,\,\infty\}$, let $P\in\mathcal P_n$ be such that $\|p-p_0\|_r\geq M_r \epsilon_{n,r}$.
For
\begin{itemize}
\item $r=1$, when $T_{n,1}\leq M_{0,1}\epsilon_{n,1}$,
$$
\|p-p_0\|_1 \leq \|p-P^n\hat p_n\|_1 + \|\hat p_n-P^n\hat p_n\|_1 + T_{n,1}
 \leq \|p-P^n\hat p_n\|_1 + \|\hat p_n-P^n\hat p_n\|_1  +  M_{0,1}\epsilon_{n,1},$$
\item $r=\infty$, when $T_{n,\infty}\leq M_{0,\infty}\epsilon_{n,\infty}$,
\[\begin{split}
\forall\, x\in\mathcal X,\quad |p(x)-p_0(x)| & \leq \|p-P^n\hat p_n\|_\infty + \|\hat p_n-P^n\hat p_n\|_1 + T_{n,\infty}
\leq \|p-P^n\hat p_n\|_\infty + \|\hat p_n-P^n\hat p_n\|_1 + M_{0,\infty}\epsilon_{n,\infty},
\end{split}\]
which implies that $\|p-p_0\|_\infty \leq \|p-P^n\hat p_n\|_\infty + \|\hat p_n-P^n\hat p_n\|_1 + M_{0,\infty}\epsilon_{n,\infty}$.
\end{itemize}
Summarizing, for $r\in\{1,\,\infty\}$, when $T_{n,r}\leq M_{0,r}\epsilon_{n,r}$, we have
$\|p-p_0\|_r \leq \|p-P^n\hat p_n\|_r + \|\hat p_n-P^n\hat p_n\|_1 + M_{0,r}\epsilon_{n,r}$.
If $\sup_{P\in\mathcal P_n}\|p-P^n\hat p_n\|_r=
\sup_{P\in\mathcal P_n}\|p-K_{J_n}(p)\|_r\leq C_K\epsilon_{n,r}$, we have
$\|\hat p_n-P^n\hat p_n\|_1  \geq \|p-p_0\|_r-\|p-P^n\hat p_n\|_r - M_{0,r}\epsilon_{n,r}
 \geq [M_r- (C_K+M_{0,r})]\epsilon_{n,r}$. Using, as before, McDiarmind's inequality with $P$ playing the same
role as $P_0$, we get that for a constant $\alpha\in(0,\,1)$ small enough and $[M_r-(C_K+M_{0,r})]>0$,
\[\begin{split}\sup_{P\in\mathcal P_n:\,\|p-p_0\|_r\geq M_r \epsilon_{n,r}}
P^n(1-\phi_{n,r})=P^n(\|\hat p_n-p_0\|_r\leq M_{0,r}\epsilon_{n,r})&=P\bigl(\|\hat p_n-P^n\hat p_n\|_1 \geq [M_r-(C_K+M_{0,r})]\epsilon_{n,r}\bigr)\\
&\leq 2 \exp{(-\alpha^2[M_r-(C_K+M_{0,r})]^2n\epsilon_{n,r}^2/\|\Phi\|_1^2)}.
\end{split}\]
We need that $\alpha^2[M_r-(C_K+M_{0,r})]^2/\|\Phi\|_1^2\geq (C+4)$, which implies that
$[M_r-(C_K+M_{0,r})]\geq \alpha^{-1}\|\Phi\|_1\sqrt{C+4}$. This concludes the proof of the first assertion.

If the convergence in \eqref{eq:postconc} holds for $r=1$ and $r=\infty$, then the last assertion of the statement
follows from the interpolation inequality: for every $1< s < \infty$,
$\|p-p_0\|_s\leq \max\{\|p-p_0\|_1,\,\|p-p_0\|_\infty\}$.
\end{proof}

\section{Auxiliary results for Proposition \ref{prop2}}\label{App}

Following \citet{Parzen}, \citet{WatsonLeadbetter}, we adopt the subsequent definition.

\begin{definition}
The Fourier transform or characteristic function of a Lebesgue
probability density function $p$ on $\mathbb{R}$,
denoted by $\tilde p$, is said to \emph{decrease algebraically of degree $\beta>0$} if
\begin{equation*}\label{eq:algebraic}
\lim_{|t|\rightarrow\infty}|t|^\beta|\tilde p(t)|=B_p, \quad 0<B_p<\infty.
\end{equation*}
\end{definition}

The following lemma is essentially contained in the first theorem of section 3B
in \citet{WatsonLeadbetter}.

\begin{lemma}\label{lemma1}
Let $p\in \mathbb{L}^2(\mathbb{R})$ be a probability density with characteristic function
that decreases algebraically of degree $\beta>1/2$. Let $h\in\mathbb{L}^1(\mathbb{R})$
have Fourier transform $\tilde h$ satisfying

\begin{equation}\label{eq:integrability}
I_\beta[\tilde h]:=\int\frac{|1-\tilde h(t)|^2}{|t|^{2\beta}}\di t<\infty.
\end{equation}
Then, $\delta^{-2(\beta-1/2)}\|p-p\ast h_\delta \|_2^2\rightarrow (2\pi)^{-1}B^2_p\times
I_\beta[\tilde h]$ as $\delta\rightarrow0$.
\end{lemma}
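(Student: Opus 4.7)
The plan is to pass to the Fourier side via Plancherel and then rescale so the algebraic-decrease hypothesis produces the limiting constant $B_p^2$. Since $\widetilde{h_\delta}(t)=\tilde h(\delta t)$, Plancherel's identity combined with the convolution theorem gives
\[
\|p-p*h_\delta\|_2^2=\frac{1}{2\pi}\int |\tilde p(t)|^2\,|1-\tilde h(\delta t)|^2\,\di t.
\]
Setting $u=\delta t$ and multiplying by $\delta^{-2(\beta-1/2)}$, this becomes
\[
\delta^{-2(\beta-1/2)}\|p-p*h_\delta\|_2^2=\frac{1}{2\pi}\int \frac{g(u/\delta)}{|u|^{2\beta}}\,|1-\tilde h(u)|^2\,\di u,
\]
where $g(v):=|v|^{2\beta}|\tilde p(v)|^2$. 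Thus everything reduces to showing that this integral converges to $B_p^2\,I_\beta[\tilde h]$ as $\delta\to 0$.

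The second step would be a pointwise-plus-dominated-convergence argument. By the algebraic-decrease hypothesis, $g(v)\to B_p^2$ as $|v|\to\infty$, so for every fixed $u\neq 0$ one has $g(u/\delta)\to B_p^2$ as $\delta\to0$, and the integrand converges pointwise a.e.\ to $B_p^2|1-\tilde h(u)|^2/|u|^{2\beta}$. For the dominating function, I would observe that $p\in L^1(\mathbb{R})$ makes $\tilde p$ continuous and bounded, while $\beta>0$ gives $g(0)=0$; combined with the limit $B_p^2$ at infinity, $g$ is continuous on $\mathbb{R}$ and globally bounded, say $\|g\|_\infty\le M$. The integrand is then dominated by $M|1-\tilde h(u)|^2/|u|^{2\beta}$, which is integrable by hypothesis \eqref{eq:integrability}. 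Dominated convergence then delivers the claimed limit.

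The main delicate point is the global bound $\|g\|_\infty<\infty$: without it, the portion of the integral near $u=0$ (where $1/|u|^{2\beta}$ blows up and $u/\delta$ can span the entire line as $\delta$ shrinks) cannot be controlled. The hypothesis $\beta>1/2$, together with $\tilde p$ being continuous with $\tilde p(0)=1$, ensures that $g$ interpolates smoothly from $0$ at the origin to its asymptotic value $B_p^2$, so only continuity plus the limit at infinity is needed for a uniform bound. Once that is in place, the remaining argument is a direct application of DCT to the Plancherel-rewritten integral.
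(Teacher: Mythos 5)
Your proof is correct and follows essentially the same route as the paper's: Plancherel's identity, the rescaling $u=\delta t$, and dominated convergence using the integrability of $|1-\tilde h(u)|^2/|u|^{2\beta}$. Your explicit verification that $g(v)=|v|^{2\beta}|\tilde p(v)|^2$ is globally bounded (continuity, $g(0)=0$, finite limit $B_p^2$ at infinity) just spells out the domination step that the paper leaves implicit.
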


\begin{proof}
Since $p\in \mathbb{L}^1(\mathbb{R})\cap\mathbb{L}^2(\mathbb{R})$, 
then $\|p\ast h_\delta\|_q\leq \|p\|_q\|h_\delta\|_1<\infty$, for $q=1,\,2$. Thus,
$p\ast h_\delta\in \mathbb{L}^1(\mathbb{R})\cap\mathbb{L}^2(\mathbb{R})$.
It follows that $(p-p\ast h_\delta)\in \mathbb{L}^1(\mathbb{R})\cap\mathbb{L}^2(\mathbb{R})$.
Hence,
\begin{eqnarray*}
\|p-p\ast h_\delta\|_2^2=
\frac{\delta^{2\beta-1}}{2\pi}\bigg\{B_p^2 \times I_\beta[\tilde h]
+
\int \frac{|1-\tilde h(z)|^2}{|z|^{2\beta}}[|z/\delta|^{2\beta}|\tilde p(z/\delta)|^2-B_p^2]\,\di z
\bigg\},
\end{eqnarray*}
where the second integral tends to $0$ by the dominated convergence theorem
because of assumption (\ref{eq:integrability}).
\end{proof}
In the next remark, which is essentially due to \citet{Davis77}, section 3, we consider a sufficient condition for a function
$h\in L^1(\mathbb{R})$ to satisfy requirement (\ref{eq:integrability}).
\begin{Rem}
\emph{If $h\in L^1(\mathbb{R})$, then $\int_1^{\infty}t^{-2\beta}|1-\tilde h(t)|^2\,\di t<\infty$ for $\beta>1/2$.
Suppose further that there exists an integer $r\geq 2$ such that $\int x^m h(x)\,\di x=0$, for $m=1,\,\ldots,\, r-1$,
and $\int x^r h(x)\,\di x\neq 0$.
Then,
\[\begin{split}
\frac{[1-\tilde h(t)]}{t^{r}} = - t^{-r} \int \bigg[e^{itx}-\sum_{j=0}^{r-1}\frac{(itx)^j}{j!}h(x)\bigg]\, \di x
= - \frac{i^r}{(r-1)!}\int x^r h(x)\int_0^1(1-u)^{r-1}e^{itxu}\,\di u\,\di x
\rightarrow
-\frac{i^r}{r!}\int x^r h(x)\,\di x,
\end{split}\]
as $t\rightarrow0$. For $r\geq\beta$, the integral $\int_0^1t^{-2\beta}|1-\tilde h(t)|^2\,\di t<\infty$. Conversely, for $r<\beta$, the integral diverges.
Therefore, for $1/2<\beta\leq 2$, any symmetric probability density $h$ with finite second moment is such that
$I_\beta[\tilde h]<\infty$ and condition (\ref{eq:integrability}) is verified.}
\end{Rem}








\nocite{*}
\bibliographystyle{plain}


\end{document}